\documentclass[runningheads]{llncs}

\usepackage{graphicx}
\usepackage{amsmath,amssymb,amsfonts}
\usepackage{dsfont}
\usepackage{booktabs}
\usepackage{array}
\usepackage{caption}
\usepackage{float}
\usepackage{tabularx}
\usepackage{url}
\usepackage{cite}
\usepackage{subfig}
\usepackage{hyperref}
\usepackage{subcaption}
\usepackage{xcolor}
\usepackage{algorithm}
\usepackage{algpseudocode}

\hypersetup{
colorlinks=true,
linkcolor=blue,
citecolor=blue,
urlcolor=blue,
pdfborder={0 0 0},
bookmarksnumbered=true
}

\newcommand{\F}{\mathbb{F}}

\newcommand{\wt}[1]{\mathrm{wt}(#1)}

\newcommand{\Adv}[1]{\mathsf{Adv}_{#1}}
\newcommand{\EUF}{\mathsf{EUF\text{-}CMA}}
\newcommand{\xor}{\oplus}
\newcommand{\pk}{\mathsf{pk}}
\newcommand{\sk}{\mathsf{sk}}
\newcommand{\Hpub}{H_{\mathsf{pub}}}
\newcommand{\Hsec}{H_{\mathsf{sec}}}

\newcommand{\calA}{\mathcal{A}}
\newcommand{\calB}{\mathcal{B}}
\newcommand{\calD}{\mathcal{D}}

\begin{document}

\title{Post-Quantum Sanitizable Signatures from McEliece-Based Chameleon Hashing}
\titlerunning{PQ Sanitizable Signatures from McEliece Chameleon Hashing}

%
%

	\author{Shahzad Ahmad\inst{1}\orcidID{0000-0002-9654-869X} \and
		Stefan Rass\inst{1}\orcidID{0000-0003-2821-2489} \and
		Zahra Seyedi\inst{2}\orcidID{0009-0002-8492-4640}}
	
	\authorrunning{S. Ahmad et al.}
	
	\institute{LIT Secure and Correct Systems Lab, Johannes Kepler University, Linz, Austria
		\and
		Department of Electronics, Information and Bioengineering, Polytechnic University of Milan,
		Milan, Italy\\
		\email{shahzad.ahmad@jku.at},
		\email{stefan.rass@jku.at},
		\email{zahrasseyedi@gmail.com}}


\maketitle

\begin{abstract}
We introduce a novel post-quantum sanitizable signature scheme constructed upon a chameleon hash function derived from the McEliece cryptosystem. In this design, the designated sanitizer possesses the inherent trapdoor of a Goppa code, which facilitates controlled collision-finding via Patterson decoding. This mechanism enables authorized modification of specific message blocks while ensuring all other content remains immutably bound. We provide formal security definitions and rigorous proofs of existential unforgeability and immutability, grounded in the hardness of syndrome decoding in the random-oracle model, where a robust random oracle thwarts trivial linear hash collisions. A key innovation lies in our precise characterization of the transparency property: by imposing a specific weight constraint on the randomizers generated by the signer, we achieve perfect transparency, rendering sanitized signatures indistinguishable from freshly signed ones. This work establishes the first transparent, code-based, post-quantum sanitizable signature scheme, offering strong theoretical guarantees and a pathway for practical deployment in long-term secure applications.

\end{abstract}

\keywords{Post-quantum cryptography \and Sanitizable signatures \and
McEliece cryptosystem \and Chameleon hash functions \and
Code-based cryptography \and Syndrome decoding \and Transparency}

\section{Introduction}\label{sec:intro}
A sanitizable signature scheme allows a signer to partially delegate modification rights to a designated \emph{sanitizer}: selected blocks of a signed document can be altered by the sanitizer without invalidating the signature, while all remaining blocks stay immutable~\cite{ateniese_sanitizable_2005}. The signer specifies an \emph{admissibility mask} $\mathit{adm}$ that indicates which blocks are mutable.
The sanitizer, who holds secret trapdoor information, can update those blocks and adjust the signature accordingly. Canonical applications include redacting patient identifiers from physician-signed medical records, updating expiration dates in X.509 certificates, and anonymizing supply chain documentation.

The central privacy property is \emph{transparency}: an observer, given only the sanitized document and its signature, should not be able to determine whether the signature originated with the original signer or the sanitizer~\cite{ateniese_sanitizable_2005}. Formally, the distributions of fresh and sanitized signatures should be (statistically or computationally) close. Complementary properties are \emph{immutability}, which ensures the sanitizer cannot alter blocks outside the mask, and \emph{unforgeability}, which ensures no third party can produce valid signatures.

\paragraph{The quantum threat.}
Sanitizable signatures are foundational tools for regulated data modification and have always depended on trapdoor hashes derived from RSA, discrete logarithms, or pairings~\cite{krawczyk_chameleon_1998,miyazaki_digitally_2006,derler_fine-grained_2019}. Yet, all of these primitives are definitively broken in polynomial time by Shor's algorithm~\cite{shor_algorithms_1994} on a sufficiently large quantum computer. In direct response to this threat, lattice-based sanitizable signatures have very recently emerged~\cite{ClermontPQ2025}, but they strictly require the random-oracle model (ROM) for their chameleon hash component. Until now, code-based sanitizable signatures have not been realised.

We decisively fill this gap by constructing the first transparent, post-quantum, sanitizable signature scheme based on code-based assumptions. A naive linear hash function $H(x,r)=(x \xor r)\cdot \Hpub^T$ is unambiguously vulnerable to a trivial collision attack: an adversary who sees $(x,r)$ can freely choose $r'\ne r$ and set $x'=x \xor r \xor r'$ to obtain a collision without the trapdoor. Our key insight preempts this attack by preprocessing the message input with a random oracle, defining the hash as $H(m,r)=(G(m)\xor r)\cdot\Hpub^T$, where $G$ is modelled as SHA-3. Finding a collision now categorically requires inverting $G$, which is infeasible.

We provide a precise transparency analysis. Constraining the signer to sample randomizers of exactly weight $t$ (matching Patterson decoding's output distribution) achieves perfect transparency ($\delta=0$) for any number of admissible blocks. If the signer samples randomizers of weight \emph{at most} $t$, the statistical distance is sharply bounded: $\delta = 1 - \binom{n}{t}/\!\sum_{j=0}^{t}\binom{n}{j} \approx 0.0187$ for standard parameters ($n=3488$, $t=64$).

\subsection{Our Contributions}

\begin{enumerate}
\item \textit{ROM-secure McEliece chameleon hash.}
We define $H_\pk(m,r) = (G(m)\xor r)\cdot\Hpub^T$, where $G$ is a SHA-3 random oracle.
Collision resistance follows from ROM + $\mathrm{SD}(n,n{-}k,2t)$ hardness: a collision yields a weight-$\le 2t$ vector whose syndrome under $\Hpub$ equals a non-zero ROM-derived target, which is a genuine SD instance. The trapdoor (Goppa-code secret key) enables Patterson decoding to find collisions in $O(n\cdot t)$ operations.

\item \textit{Sanitizable signature scheme.}
We build a complete scheme using two independent McEliece chameleon hashes: one for immutable blocks and one for admissible blocks, chained across message blocks, with a Dilithium2-based signature on the final digest. Unforgeability and immutability are proved in the ROM.

\item \textit{Tight transparency analysis.}
We show that constraining the signer to sample weight-$t$ randomizers achieves
perfect transparency ($\delta=0$). We also derive the non-zero distance
$\delta \approx 0.0187$ for the unconstrained case and prove a hybrid-argument bound of $\mathsf{Adv}\le L\cdot\delta + \varepsilon_{\mathsf{sig}}$.
This provides a clear trade-off between implementation choices and privacy guarantees.

\item \textit{Implementation and empirical benchmarks.}
A Python prototype on NIST Category~1 Classic-McEliece parameters ($n=3488$, $k=2720$, $t=64$) validates all theoretical claims. Theoretical Patterson time is $\approx 8$\,ms per modified block. Public key size is $\approx 655.3$\,KB; signature size is $\approx 6.75$\,KB for $L=10$.
\end{enumerate}

\section{Related Work}\label{sec:relatedwork}

\paragraph{Sanitizable signatures.}
The primitive was introduced by Ateniese et al.~\cite{ateniese_sanitizable_2005}, who formalised immutability and transparency using RSA and discrete-log trapdoor hashes. Brzuska et al.~\cite{jarecki_security_2009} strengthened the security model established by Ateniese et al. by clarifying the roles of signing and sanitization oracles and by extending the formal treatment of accountability. Content-extraction signatures~\cite{steinfeld_content_2002} and redactable signatures are closely related, but differ from sanitizable signatures: they either assume a different trust model or do not use a trapdoor for sanitization.

\paragraph{Chameleon hashes.}
Krawczyk and Rabin~\cite{krawczyk_chameleon_1998} formalized chameleon hashes, followed by classical constructions using RSA~\cite{rivest_method_nodate}, discrete logarithms~\cite{elgamal_public_1985}, or pairings~\cite{miyazaki_digitally_2006}. Subsequently, Derler et al.~\cite{derler_fine-grained_2019} introduced fine-grained and attribute-based variants. Bellare and Ristov~\cite{bellare_hash_2008} generalised the construction by building chameleon hashes from $\Sigma$-protocols; for instance, incorporating code-based identification (e.g., Stern's protocol~\cite{stern1993new} or the CROSS-based protocol~\cite{CROSS2024}) produces a ROM-based code chameleon hash. In contrast, our construction uses the McEliece syndrome structure directly, making the connection to Patterson decoding explicit and enabling parameters to be compared directly to those of Classic-McEliece.

\paragraph{Collision-resistance notions.}
Derler, Samelin, and Slamanig~\cite{derler_collision_2020} survey the hierarchy of collision-resistance notions for chameleon hashes, including collision resistance (CR), second-preimage resistance (SPR), and message-binding collision resistance (CMR). Our construction satisfies CR and CMR in the ROM; exposure-freeness is not claimed, and we discuss this limitation in Section~\ref{sec:discussion}.

\paragraph{Post-quantum sanitizable signatures.}
Until very recently, no post-quantum sanitizable signature existed. Clermont et al.~\cite{ClermontPQ2025} present the first lattice-based construction, instantiated in the ROM with structured lattice trapdoors and verifiable ring signatures. Their scheme achieves post-quantum security but relies on the ROM for chameleon hash collision resistance, as does ours. Neither construction has been shown to be end-to-end standard-model secure. Our scheme uses a fundamentally different assumption (syndrome decoding vs \ Module-LWE/SIS), relies on a 45-year-old hard problem, and achieves \emph{perfect transparency} via the weight-$t$ constraint.

\paragraph{Code-based hash functions and signatures.}
Augot, Finiasz, and Sendrier~\cite{augot_finiasz_2005} construct fast syndrome-based hash functions and discuss trapdoors, though not in the chameleon-hash framework.
Courtois, Finiasz, and Sendrier~\cite{cryptoeprint:2001/010} exploit Patterson decoding as a signing trapdoor. Our work repurposes syndrome decoding for \emph{controlled collision finding} in a chameleon hash, which is a conceptually distinct application. Table~\ref{tab:comparison} positions our scheme among existing constructions.

\begin{table}[t]
\centering
\caption{Comparison of sanitizable signature schemes. PQ = Post-Quantum;
	CH Model = security model for chameleon hash collision resistance.}
\label{tab:comparison}
\begin{tabularx}{\textwidth}{@{}lXccXl@{}}
	\toprule
	\textbf{Scheme} & \textbf{Trapdoor} & \textbf{PQ} &
	\textbf{CH Model} & \textbf{Transp.} & \textbf{Collision Mech.}\\
	\midrule
	Ateniese et al.~\cite{ateniese_sanitizable_2005} & RSA/DL & No &
	Standard & Comput. & Factoring/DL trapdoor \\
	Miyazaki et al.~\cite{miyazaki_digitally_2006} & Pairings & No &
	Standard & Strong & DDH collision \\
	Clermont et al.~\cite{ClermontPQ2025} & Lattice & Yes &
	ROM & Weak & Preimage sampling \\
	\midrule
	\textbf{This work} & \textbf{McEliece} & \textbf{Yes} &
	\textbf{ROM} & \textbf{Perfect ($\delta=0$, w/ constraint)} &
	\textbf{Patterson decoding} \\
	\bottomrule
\end{tabularx}
\end{table}

\section{Background and Formal Definitions}\label{sec:background}

\subsection{Notation}
We write $[n]=\{1,\dots,n\}$, $\kappa$ for the security parameter, PPT for probabilistic polynomial time, and $\mathsf{negl}(\kappa)$ for a negligible function. All vectors are row vectors over $\F_2$. $\wt{v}$ denotes the Hamming weight of $v$. For a binary Goppa code $C[n,k,2t+1]$, we denote its secret parity-check matrix $\Hsec\in\F_2^{(n-k)\times n}$ and public (masked) matrix $\Hpub\in\F_2^{(n-k)\times n}$.

\subsection{Computational Assumptions}\label{subsec:assumptions}

\begin{definition}[Syndrome Decoding (SD)]
Let $H\in\F_2^{\rho\times n}$, $s\in\F_2^{\rho}$, $w\in\mathbb{Z}_{>0}$. The \emph{syndrome decoding problem} $\mathrm{SD}(n,\rho,w)$ asks to find $e\in\F_2^n$ with $\wt{e}\le w$ and $e\cdot H^T=s$. (Here $\rho$ denotes the number of parity-check rows; throughout the paper $r$ is reserved for chameleon-hash randomizer vectors.) $\mathrm{SD}$ is NP-complete in the worst case~\cite{berlekamp_inherent_1978} and is believed to be hard on average for random $H$. The best quantum algorithms for $\mathrm{SD}$ (based on quantum ISD) provide only a sub-exponential speedup, leaving 128-bit post-quantum security intact
for Classic-McEliece-3488.
\end{definition}

\begin{definition}[McEliece Assumption]
The \emph{McEliece decisional assumption} states that a masked parity-check matrix $\Hpub = S'\cdot\Hsec\cdot P$ (with random invertible $S'$ and permutation $P$) is computationally indistinguishable from a uniformly random binary matrix, for any PPT distinguisher given $\Hpub$ but not $(S',P)$.
\end{definition}

\subsection{Chameleon Hash Functions}\label{subsec:ch}

\begin{definition}[Chameleon Hash~\cite{krawczyk_chameleon_1998}]
A chameleon hash scheme \[\mathsf{CH}=(\mathsf{CHGen},\mathsf{CH},\mathsf{CHCol})\]
consists of:
\begin{itemize}
	\item $\mathsf{CHGen}(1^\kappa)\to(\pk_\mathsf{ch},\sk_\mathsf{ch})$:
	generates a public hashing key and secret trapdoor.
	\item $\mathsf{CH}_{\pk}(m,r)\to h$: hashes message $m$ with randomness
	$r$ to digest $h$.
	\item $\mathsf{CHCol}_{\sk}(m,r,m')\to r'$: given the trapdoor, finds
	$r'$ such that $\mathsf{CH}_\pk(m,r)=\mathsf{CH}_\pk(m',r')$.
\end{itemize}
\textbf{Collision resistance (CR):} No PPT adversary without $\sk$ can find
$(m,r)\ne(m',r')$ with $\mathsf{CH}_\pk(m,r)=\mathsf{CH}_\pk(m',r')$ except
with negligible probability.
\end{definition}

\subsection{Sanitizable Signature Schemes}\label{subsec:sss}

\begin{definition}[Sanitizable Signature Scheme~\cite{ateniese_sanitizable_2005,jarecki_security_2009}]
A sanitizable signature scheme $\mathsf{SSS}=
(\mathsf{KeyGen},\mathsf{Sign},\mathsf{Sanitize},\mathsf{Verify})$ operates as follows:
\begin{itemize}
	\item $\mathsf{KeyGen}(1^\kappa)\to
	((\pk_\mathsf{sig},\sk_\mathsf{sig}),(\pk_\mathsf{san},\sk_\mathsf{san}))$:
	generates key pairs for signer and sanitizer.
	\item $\mathsf{Sign}(\sk_\mathsf{sig},\pk_\mathsf{san},M,\mathit{adm})\to\sigma$:
	the signer produces a signature on message $M=(M[1],\dots,M[L])$ and mask
	$\mathit{adm}\in\{0,1\}^L$, where $\mathit{adm}[i]=1$ marks block $i$
	as admissible.
	\item $\mathsf{Sanitize}(\sk_\mathsf{san},M,\sigma,M')\to\sigma'$:
	given a valid signature on $(M,\mathit{adm})$ and a message $M'$ that
	differs from $M$ only on admissible blocks, the sanitizer outputs a
	valid signature on $(M',\mathit{adm})$.
	\item $\mathsf{Verify}(\pk_\mathsf{sig},\pk_\mathsf{san},M,\sigma)\to
	\{\mathsf{True},\mathsf{False}\}$.
\end{itemize}
\end{definition}

\paragraph{Security properties.}
We recall the three core security games:

\begin{definition}[$\EUF$-Security~\cite{jarecki_security_2009}]
An adversary $\calA$ wins the EUF game if it outputs a valid signature $(M^*,\sigma^*)$ on a message $M^*$ not previously queried to the signing oracle (or sanitization oracle producing the same authenticated content). The scheme is $\EUF$-secure if no PPT adversary wins with non-negligible advantage.
\end{definition}

\begin{definition}[Immutability~\cite{ateniese_sanitizable_2005}]
An adversary $\calA$ (who may hold $\sk_\mathsf{san}$) wins the immutability game if it produces a valid signature $(M',\sigma')$ where $M'$ differs from any previously signed $M$ in some block $i$ with $\mathit{adm}[i]=0$. The scheme is immutable if no PPT adversary wins with non-negligible advantage.
\end{definition}

\begin{definition}[Transparency~\cite{ateniese_sanitizable_2005,brzuska_sanitizable}]
A scheme achieves \emph{strong transparency} if, for any PPT distinguisher $\calA$ given either a fresh signature (by signer) or a sanitized signature (by sanitizer) on the same final message, $\calA$'s advantage in distinguishing them is negligible.
It achieves \emph{weak transparency} if the distinguishing advantage is bounded by some $\epsilon<1/2$. We write $\Adv{\calA}^\mathsf{trans}$ for $\calA$'s distinguishing advantage in the transparency game.
\end{definition}

\begin{remark}[Accountability vs.\ Transparency]
Some models include an \emph{accountability} property allowing an external judge to determine whether a signature was produced by the signer or sanitizer. Our scheme prioritizes transparency (privacy): sanitized signatures are indistinguishable from fresh ones. Accountability is therefore not provided. However, the signer retains the original message and randomness as private evidence, providing a form of signer-side accountability.
\end{remark}

\section{Construction}\label{sec:construction}

\subsection{McEliece-Based Chameleon Hash (ROM)}\label{subsec:ch-construction}
Let $C\subset\F_2^n$ be a binary Goppa code with parameters $[n,k,2t+1]$ and secret parity-check matrix $\Hsec\in\F_2^{(n-k)\times n}$. Let $G:\{0,1\}^*\to\F_2^n$ be a hash function modelled as a random oracle (instantiated with SHA-3 in counter mode).

\paragraph{Key generation.}
Sample random invertible $S'\in\F_2^{(n-k)\times(n-k)}$ and permutation $P\in\F_2^{n\times n}$. Set $\Hpub = S'\cdot\Hsec\cdot P$. The public key is $\pk_\mathsf{ch}=\Hpub$; the secret key is $\sk_\mathsf{ch}=(P,(S')^{-1},\text{Goppa parameters})$. (We store $P$ directly because the collision-finding algorithm computes
$r'=f\cdot P$; since $P$ is a permutation matrix, $P^{-1}=P^T$ can be recovered trivially when needed.)

\paragraph{Hash function.}
For $m\in\F_2^n$ and $r\in\F_2^n$ with $\wt{r}\le t$:
\[
\mathsf{CH}_\pk(m,r) = \bigl(G(m) \xor r\bigr)\cdot\Hpub^T
\;\in\F_2^{n-k}.
\]

\paragraph{Why ROM preprocessing is essential.}
Without $G$, the hash is $H(x,r)=(x\xor r)\cdot\Hpub^T$. An adversary observing $(x,r)$ in a public signature could trivially find a collision: pick any $r'\ne r$ with $\wt{r'}\le t$, set $x'=x\xor r\xor r'$, and check that $(x'\xor r')\cdot\Hpub^T = (x\xor r)\cdot\Hpub^T$. This attack exploits the linearity of the hash and requires no trapdoor. With $G$ modelled as a ROM, the adversary must invert $G$ to choose $m'$
such that $G(m')\xor r'$ matches the required value, which is infeasible.

\paragraph{Collision finding with trapdoor.}
Given $(m,r,m')$, the sanitizer computes:
\[
s_\mathsf{target}
= \bigl(G(m)\xor r \xor G(m')\bigr)\cdot\Hpub^T
\;\in\F_2^{n-k}.
\]
This is the syndrome that $r'$ must satisfy: $r'\cdot\Hpub^T=s_\mathsf{target}$
with $\wt{r'}\le t$.
Using $\sk_\mathsf{ch}=\bigl(P,\,(S')^{-1},\,\text{Goppa parameters}\bigr)$:
\begin{enumerate}
	\item Compute $s_{\mathsf{pp}}=s_{\mathsf{target}}\cdot(S')^{-T}\in\F_2^{n-k}$.
	($(S')^{-T}$ is the transpose of the inverse of the $(n{-}k)\times(n{-}k)$ scrambler $S'$.)
	\item Apply Patterson decoding to find $f\in\F_2^n$ with $f\cdot\Hsec^T = s_{\mathsf{pp}}$ and $\wt{f}\le t$.
	\item Recover $r'$ via the column permutation: set $r'=f\cdot P$,
	where $P\in\F_2^{n\times n}$ is the permutation matrix stored in
	$\sk_\mathsf{ch}$, satisfying $\Hpub=S'\cdot\Hsec\cdot P$.
\end{enumerate}
\textbf{Correctness.}
Since $\Hpub^T = P^T\cdot\Hsec^T\cdot(S')^T$,
\[
r'\cdot\Hpub^T
= (f\cdot P)\cdot P^T\cdot\Hsec^T\cdot(S')^T
= f\cdot\Hsec^T\cdot(S')^T
= s_{\mathsf{pp}}\cdot(S')^T
= s_{\mathsf{target}},
\]
confirming $\mathsf{CH}_\pk(m,r)=\mathsf{CH}_\pk(m',r')$. Patterson decoding always succeeds when $\wt{f}\le t$ and runs in $O(n^2 t)$ bit operations (practically $O(n\cdot t)$ for the dominant polynomial operations).

\begin{remark}[Goppa code requirements]
For the Goppa code to support Patterson decoding, its defining polynomial $g(x)$ must be irreducible over $\F_{2^m}$ with no repeated roots in the support $L_\mathsf{Goppa}$, ensuring minimum distance $\ge 2t+1$. These constraints are standard in Classic-McEliece and are enforced at key generation.
\end{remark}

\begin{remark}[Weight-$t$ exactness of decoded error]\label{rem:weightt}
The verifier accepts randomizers with weight up to $t$, but Patterson decoding returns the \emph{unique} error vector $f$ with $\wt{f}\le t$. For Classic-McEliece ($n=3488$, $t=64$), the ratio $\binom{n}{t}/\sum_{j=0}^t\binom{n}{j}\approx 0.981$ indicates that nearly all correctable syndromes correspond to an error of weight exactly $t$. In the rare event that $\wt{f}<t$, our security properties hold. However,
as analysed in Section~\ref{sec:transparency}, this small distributional
difference is observable. To achieve perfect transparency, the sanitizer can
resample if $\wt{f}<t$, or the signer can constrain sampling to weight exactly $t$.
\end{remark}

\subsection{Sanitizable Signature Scheme}\label{subsec:sss-construction}

\paragraph{Parameters and message format.}
Fix code parameters $(n,k,t)$ providing $\kappa$-bit post-quantum security.
Messages are split into $L$ blocks
$M=(M[1],\dots,M[L])$ with $M[i]\in\F_2^k$.
For each block, the previous $(n-k)$-bit hash is concatenated with
$M[i]$ to form an $n$-bit input to the chameleon hash.

\vspace{0.5em}

\begin{center}
	\fbox{
		\begin{minipage}{0.96\linewidth}
			\textbf{Key Generation $\mathsf{KeyGen}(1^\kappa)$}
			\begin{enumerate}
				\item Generate two independent binary Goppa codes
				$C_\mathsf{non},C_\mathsf{san}$ with parameters $[n,k,2t+1]$ and
				parity-check matrices $\Hsec^\mathsf{non},\Hsec^\mathsf{san}$.
				\item Mask both matrices:
				\[
				\Hpub^\mathsf{non}=S'_\mathsf{non}\Hsec^\mathsf{non}P_\mathsf{non},\qquad
				\Hpub^\mathsf{san}=S'_\mathsf{san}\Hsec^\mathsf{san}P_\mathsf{san}.
				\]
				\item Run the outer signature key generation
				\[
				(\pk_\mathsf{sig},\sk_\mathsf{sig})
				\leftarrow \mathsf{KeyGen}_\mathsf{sig}(1^\kappa).
				\]
				\item Output
				\[
				\pk=(\pk_\mathsf{sig},\Hpub^\mathsf{non},\Hpub^\mathsf{san}),
				\]
				\[
				\sk_\mathsf{sig},
				\qquad
				\sk_\mathsf{san}=
				(P_\mathsf{san},(S'_\mathsf{san})^{-1},
				\text{Goppa parameters of }C_\mathsf{san}).
				\]
			\end{enumerate}
	\end{minipage}}
\end{center}

\vspace{0.5em}

\begin{center}
	\fbox{
		\begin{minipage}{0.96\linewidth}
			\textbf{Signing $\mathsf{Sign}(\sk_\mathsf{sig},\pk,M,\mathit{adm})$}
			\begin{enumerate}
				\item Set $h_0 \gets 0^{n-k}$.
				\item For $i=1,\dots,L$:
				\begin{enumerate}
					\item $x_i \gets h_{i-1}\|M[i] \in \F_2^n$
					\item Sample $r_i \leftarrow \{e\in\F_2^n : \wt(e)=t\}$
					\item Compute
					\[
					h_i=
					\begin{cases}
						\mathsf{CH}_{\Hpub^\mathsf{non}}(x_i,r_i) & \mathit{adm}[i]=0\\
						\mathsf{CH}_{\Hpub^\mathsf{san}}(x_i,r_i) & \mathit{adm}[i]=1
					\end{cases}
					\]
				\end{enumerate}
				\item $\sigma_\mathsf{sig}
				\leftarrow
				\mathsf{Sign}_\mathsf{sig}(\sk_\mathsf{sig},\,h_L\|\mathit{adm})$
				\item Output
				\[
				\sigma=(h_L,\sigma_\mathsf{sig},(r_1,\dots,r_L),\mathit{adm}).
				\]
			\end{enumerate}
	\end{minipage}}
\end{center}

\vspace{0.5em}

\begin{center}
	\fbox{
		\begin{minipage}{0.96\linewidth}
			\textbf{Verification $\mathsf{Verify}(\pk,M,\sigma)$}
			\begin{enumerate}
				\item Parse
				$\sigma=(h'_L,\sigma_\mathsf{sig},(r_1,\dots,r_L),\mathit{adm})$
				\item Reject if any $\wt(r_i)\neq t$
				\item Set $h_0 \gets 0^{n-k}$
				\item For $i=1,\dots,L$:
				\[
				h_i=
				\begin{cases}
					\mathsf{CH}_{\Hpub^\mathsf{non}}(h_{i-1}\|M[i],r_i) & \mathit{adm}[i]=0\\
					\mathsf{CH}_{\Hpub^\mathsf{san}}(h_{i-1}\|M[i],r_i) & \mathit{adm}[i]=1
				\end{cases}
				\]
				\item Accept iff
				\[
				h_L=h'_L
				\land
				\mathsf{Verify}_\mathsf{sig}(\pk_\mathsf{sig},\,h_L\|\mathit{adm},\,\sigma_\mathsf{sig}).
				\]
			\end{enumerate}
	\end{minipage}}
\end{center}

\vspace{0.5em}

\begin{center}
	\fbox{
		\begin{minipage}{0.96\linewidth}
			\textbf{Sanitization $\mathsf{Sanitize}(\sk_\mathsf{san},M,\sigma,M')$}
			\begin{enumerate}
				\item Verify $\sigma$ on $(M,\mathit{adm})$; abort if invalid
				\item Recompute hash chain $h_0,\dots,h_L$
				\item Let
				\[
				I=\{\,i\in[L] : M[i]\neq M'[i]\land \mathit{adm}[i]=1\,\}
				\]
				(processed in ascending order)
				\item For each $i\in I$:
				\begin{enumerate}
					\item $m_\text{old}\gets h_{i-1}\|M[i]$
					\item $m_\text{new}\gets h_{i-1}\|M'[i]$
					\item Using the trapdoor of $\Hpub^\mathsf{san}$,
					find $r'_i$ such that
					\[
					\mathsf{CH}_{\Hpub^\mathsf{san}}(m_\text{old},r_i)
					=
					\mathsf{CH}_{\Hpub^\mathsf{san}}(m_\text{new},r'_i),
					\quad
					\wt(r'_i)=t
					\]
					\item Set $r_i \gets r'_i$
					\item Recompute $h_j$ for all $j\ge i$
				\end{enumerate}
				\item Output
				\[
				\sigma'=(h_L,\sigma_\mathsf{sig},(r_1,\dots,r_L),\mathit{adm})
				\]
			\end{enumerate}
	\end{minipage}}
\end{center}

\paragraph{Transparency mechanism.}
The outer signature binds only $(h_L,\mathit{adm})$.
Since sanitization preserves $h_L$ and $\mathit{adm}$,
the value $\sigma_\mathsf{sig}$ remains unchanged.
Hence sanitized and original signatures are
indistinguishable to any verifier without the trapdoor.

\section{Security Analysis}\label{sec:security}
We prove security in the random oracle model, where $G$ is an ideal hash function with no structure. All proofs are by reduction to standard hard problems.

\subsection{Collision Resistance of the Chameleon Hash (ROM)}\label{subsec:cr-proof}

\begin{theorem}[CR of $\mathsf{CH}_\pk$ in the ROM]\label{thm:cr}
Let $\calA$ be a PPT adversary making at most $q_G$ queries to $G$. If $\calA$ finds a collision $(m,r)\ne(m',r')$ with $\mathsf{CH}_\pk(m,r)=\mathsf{CH}_\pk(m',r')$ with probability $\epsilon$, then there exists a PPT algorithm $\calB$ solving $\mathrm{SD}(n,n-k,2t)$ with probability at least $\epsilon - q_G^2 / 2^{n-k}$.
\end{theorem}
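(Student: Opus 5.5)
The plan is a direct reduction in which $\calB$ simulates the random oracle $G$ for $\calA$ and programs it so that any collision output by $\calA$ turns into a solution of the \emph{given} SD challenge. $\calB$ receives an instance $(H,s)$ of $\mathrm{SD}(n,n-k,2t)$. It sets $\pk=\Hpub:=H$; that this is distributed like a real public key is what the McEliece assumption is taken to provide, and I treat it as absorbed into the SD instance distribution, as the statement does. By Gaussian elimination $\calB$ computes some (high-weight) $z\in\F_2^n$ with $z\cdot H^T=s$, and it samples one uniform offset $w\in\F_2^n$.

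\textbf{Programming $G$.} On the $j$-th fresh query $m_j$, $\calB$ samples a uniform codeword $k_j$ (so $k_j\cdot H^T=0$) and a bit $c_j$. It answers
\[
G(m_j)=k_j\xor w\xor c_j z .
\]
Now suppose $\calA$ outputs a collision $(m,r)\ne(m',r')$ with $\wt{r},\wt{r'}\le t$. Linearity of the hash gives
\[
(r\xor r')\cdot H^T=\bigl(G(m)\xor G(m')\bigr)\cdot H^T=(c\xor c')\,s .
\]
The outcome then splits into three cases.
\begin{itemize}
\item $m=m'$. Then $r\ne r'$, and $r\xor r'$ is a nonzero codeword of weight $\le 2t$. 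This cannot occur, because the Goppa code has minimum distance $2t+1$. If $H$ is instead treated as random, this case gives a low-weight codeword, which is again an SD solution (target $0$).
\item $m\ne m'$ and $c\ne c'$. Then $e=r\xor r'$ satisfies $\wt{e}\le 2t$ and $e\cdot H^T=s$ exactly, so $\calB$ outputs $e$.
\item $m\ne m'$ and $c=c'$. This is the failure case for $\calB$.
\end{itemize}

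\textbf{Probability accounting.} The subtracted term $q_G^2/2^{n-k}$ is meant to pay for the gap between real and simulated syndromes. In the real ROM, the syndromes $G(m_j)\cdot H^T$ are $q_G$ independent uniform elements of $\F_2^{n-k}$. By a birthday bound, two of them coincide with probability at most $q_G^2/2^{n-k}$, and this coincidence is exactly what a collision with $c=c'$ exploits. I would first bound $\calA$'s success restricted to collisions whose two queries have \emph{distinct} real syndromes, which is at least $\epsilon-q_G^2/2^{n-k}$. I would then map the event ``distinct syndromes'' onto the event ``$c\ne c'$'' in the simulation.

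\textbf{Main obstacle.} The step I expect to be hardest is justifying that the simulated oracle is indistinguishable from the real one. Each individual output $G(m_j)$ is not uniform: $k_j$ ranges only over the code, and $w$ is shared across queries. Consequently every simulated syndrome equals $wH^T$ or $wH^T\xor s$, which is only two values, and $\calA$ can compute syndromes itself since $H$ is public. My first attempt would be a hybrid over the $q_G$ queries. In it, the uniformly random real syndromes are replaced by this two-valued embedding. I would argue that $\calA$'s \emph{collision-finding} behaviour depends only on whether syndrome pairs coincide, and charge any deviation to the same birthday term. Failing that, the fallback is to have $\calB$ guess the pair of queries $(j,j')$ used by the collision and embed $s$ only in their syndrome difference, leaving all other answers uniform. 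That fallback yields a multiplicative $1/q_G^2$ loss, so it would establish a weaker bound rather than the additive one stated.
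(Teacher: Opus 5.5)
\paragraph{Main reduction.} Your main reduction fails at the programming of $G$, and the gap is too large for a hybrid to absorb. With $G(m_j)=k_j\xor w\xor c_jz$, every syndrome $G(m_j)H^T$ equals either $wH^T$ or $wH^T\xor s$. So among any three queries, two have the same syndrome. Then $(m_i,r),(m_{i'},r)$ is a collision for every $r$ with $\wt{r}\le t$, and this is exactly your failure case $c=c'$.

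A three-query adversary therefore wins with probability $1$ in your simulation while $\calB$ always fails. The same test distinguishes your oracle from a real one with advantage close to $1$; in the real ROM such a coincidence has probability at most $3/2^{n-k}$. Consequently, mapping ``distinct syndromes'' onto ``$c\ne c'$'' has nothing behind it. Inside the simulation the two events coincide by construction, but their probability is unrelated to the real-world bound $\epsilon-q_G^2/2^{n-k}$. Nor can you argue that $\calA$'s behaviour ``depends only on syndrome coincidences'', since an arbitrary $\calA$ can detect the simulation and act as it likes.

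Incidentally, each $G(m_j)$ is marginally uniform because $w$ is. The defect is the joint distribution: all pairwise differences lie in $C\cup(z\xor C)$ with $C=\{v: vH^T=0\}$. The underlying obstruction is that one challenge $s$ cannot be planted in all $\binom{q_G}{2}$ pairwise differences while keeping the answers jointly uniform. Re-randomising $s$ by adding $xH^T$ for a known low-weight $x$ would push the weight bound above $2t$.

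\paragraph{Fallback and comparison with the paper.} Your fallback is sound. First make sure $\calA$'s two output messages are queried, so there are at most $q_G+2$ queries. Guess a pair $(j,j')$ among them, answer $G(m_{j'})=G(m_j)\xor z\xor k$ with $k$ uniform in $C$, and answer everything else uniformly. For uniform $s$ and full-rank $H$ this is a perfect simulation, giving roughly $\epsilon/\binom{q_G+2}{2}$ up to the McEliece term.

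In that version, handle $m=m'$ through the McEliece assumption, not as ``an SD solution with target $0$'' (which $e=0$ solves trivially). A nonzero codeword of weight $\le 2t$ is an efficiently checkable certificate that $H$ is not a masked Goppa matrix of minimum distance $2t+1$.

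The paper gets its additive bound without programming $G$ or embedding any challenge. Its $\calB$ reads $s_u=(G(m)\xor G(m'))\Hpub^T$ and $e=r\xor r'$ off $\calA$'s own collision and outputs $(\Hpub,s_u,2t,e)$. It uses $q_G^2/2^{n-k}$ only to discard ROM coincidences. In effect, it counts exhibiting a weight-$\le 2t$ solution for some non-zero ROM-derived target, chosen through the adversary's choice of pair, as ``solving SD''.

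Under the stricter reading you adopted, where $\calB$ must solve a given $(H,s)$, your main line cannot reach the additive form. What a genuine reduction along these lines yields is the fallback's multiplicative loss, or alternatively a reduction to a multi-target SD variant.
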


\begin{proof}
Suppose $\calA$ outputs $(m,r)\ne(m',r')$ with equal hash values.
Expanding the hash definition:
\[
\bigl(G(m)\xor r\bigr)\cdot\Hpub^T
= \bigl(G(m')\xor r'\bigr)\cdot\Hpub^T,
\]
which rearranges to:
\[
\bigl(G(m)\xor r \xor G(m')\xor r'\bigr)\cdot\Hpub^T = \mathbf{0}.
\]
There are two cases.

\textbf{Case 1}: $m=m'$.
Then $G(m)=G(m')$, so $(r\xor r')\cdot\Hpub^T=\mathbf{0}$. For a random-like
$\Hpub$, this implies $r\xor r'$ is a low-weight codeword. Given $\wt{r\xor r'}\le 2t$
and minimum distance $2t+1$, this implies $r\xor r' = \mathbf{0}$, so $r=r'$.
Since $(m,r)\ne(m',r')$ and $m=m'$, we must have $r\ne r'$, a contradiction.

\textbf{Case 2}: $m\ne m'$.
Expanding the collision condition and rearranging:
\[
\bigl(r\xor r'\bigr)\cdot\Hpub^T
= \bigl(G(m)\xor G(m')\bigr)\cdot\Hpub^T.
\]
Define $e=r\xor r'\in\F_2^n$ and
$s_u=\bigl(G(m)\xor G(m')\bigr)\cdot\Hpub^T\in\F_2^{n-k}$. Then $e\cdot\Hpub^T = s_u$ and $\wt{e}\le\wt{r}+\wt{r'}\le 2t$. \emph{$s_u$ is non-zero with overwhelming probability:} since $m\ne m'$, $G(m)$ and $G(m')$ are independent uniform ROM outputs,
so $G(m)\xor G(m')$ is a uniformly random non-zero $n$-bit vector, and $s_u\ne\mathbf{0}$ except with probability $2^{-(n-k)}$.

Algorithm $\calB$ runs $\calA$ in the random oracle model, programming $G$ to record all queries. When $\calA$ outputs a collision $(m,r)\ne(m',r')$ with $m\ne m'$, $\calB$ computes $s_u$ (using the recorded ROM outputs for $m$ and $m'$), sets $e=r\xor r'$, and outputs $(\Hpub, s_u, 2t, e)$ as a solution to $\mathrm{SD}(n,n-k,2t)$: a vector of weight $\le 2t$ whose syndrome under $\Hpub$ equals the non-zero target $s_u$. This is a \emph{genuine} syndrome-decoding instance (non-zero target), not a minimum-distance instance. By the birthday bound, two ROM query outputs collide with probability $q_G^2/2^{n-k}$, which is negligible. Subtracting this probability, $\calB$ succeeds with advantage $\ge\epsilon - q_G^2/2^{n-k}$.\qed
\end{proof}

\subsection{Existential Unforgeability (EUF-CMA)}\label{subsec:euf-proof}

\begin{theorem}[EUF-CMA in the ROM]\label{thm:euf}
If the base signature scheme is $\EUF$-secure with advantage $\varepsilon_\mathsf{sig}$ and $\mathsf{CH}_\pk$ is collision-resistant with advantage $\varepsilon_\mathsf{cr}$ (Theorem~\ref{thm:cr}), then our sanitizable signature scheme achieves $\EUF$ security with advantage at most $\varepsilon_\mathsf{sig}+\varepsilon_\mathsf{cr}$.
\end{theorem}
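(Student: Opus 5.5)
The plan is a standard two-case reduction that splits on whether the forgery reuses an outer-signed digest. Let $\calA$ be an $\EUF$ adversary. It has access to a signing oracle (and, in the EUF model, a sanitization oracle, which we can answer because the reduction generates both code key pairs itself), and it outputs $(M^*,\sigma^*)$ with $\sigma^*=(h^*_L,\sigma^*_\mathsf{sig},(r^*_1,\dots,r^*_L),\mathit{adm}^*)$. The challenger records every pair $(h_L^{(j)},\mathit{adm}^{(j)})$ that the outer signer was asked to sign, together with the full hash chain, message and randomizers used. Let $E$ be the event that $(h^*_L,\mathit{adm}^*)$ never appeared as an outer-signed value. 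Then $\Pr[\calA\text{ wins}]\le\Pr[\calA\text{ wins}\wedge E]+\Pr[\calA\text{ wins}\wedge\neg E]$, and we bound each term separately.

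\textbf{Case $E$ (outer forgery).} Build $\calB_\mathsf{sig}$ against the base scheme. It receives $\pk_\mathsf{sig}$, generates both Goppa key pairs itself, and simulates $G$ by lazy sampling. It answers $\mathsf{Sign}$ queries by computing the chain honestly and calling its own signing oracle on $h_L\|\mathit{adm}$. It answers sanitization queries using the trapdoor of $C_\mathsf{san}$, which never requires a new outer signature. A valid forgery under $E$ gives a valid base signature on the fresh message $h^*_L\|\mathit{adm}^*$. This term is therefore at most $\varepsilon_\mathsf{sig}$.

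\textbf{Case $\neg E$ (chain collision).} Here $(h^*_L,\mathit{adm}^*)=(h_L^{(j)},\mathit{adm}^{(j)})$ for some recorded query $j$, yet $M^*$ is not authenticated content from that query (nor from any sanitization of it). Compare the two chains $(x^*_i,r^*_i)$ and $(x^{(j)}_i,r^{(j)}_i)$ backwards from $i=L$. Both start at $h_0=0^{n-k}$ and end at the same $h_L$. If the pairs $(x_i,r_i)$ agreed at every position, the forgery would be $M^{(j)}$ with an already-seen signature. So let $i^*$ be the largest index with $(x^*_{i^*},r^*_{i^*})\ne(x^{(j)}_{i^*},r^{(j)}_{i^*})$; since the chains agree above $i^*$, both hashes at position $i^*$ equal the same output $h_{i^*}$. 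The mask $\mathit{adm}^*$ equals $\mathit{adm}^{(j)}$, so the same matrix $\Hpub^\mathsf{non}$ or $\Hpub^\mathsf{san}$ is used at $i^*$. This gives a collision of $\mathsf{CH}$, and $\calB_\mathsf{cr}$ outputs it, guessing $j$ and otherwise simulating with real keys.

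\textbf{Main obstacle.} Two points need care. First, when the colliding index uses $\Hpub^\mathsf{san}$ (or when sanitization queries are answered), $\calB_\mathsf{cr}$ must embed the challenge key in $\Hpub^\mathsf{non}$ and hold the trapdoor for $\Hpub^\mathsf{san}$. A collision on an admissible block produced with the trapdoor is not a break but exactly the allowed sanitization. I would handle this by defining ``authenticated content'' as everything reachable by legitimate sanitization of query $j$, so that the winning condition forces the first differing position to be non-admissible, where the reduction has no trapdoor and Theorem~\ref{thm:cr} applies. Second, Theorem~\ref{thm:cr} is stated for a single adversary output, so the reduction must not lose a factor of $q$ from guessing $j$. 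I would avoid the guess by letting $\calB_\mathsf{cr}$ scan all stored transcripts for the matching $h_L$, which is possible because it recorded them. This gives $\Pr[\neg E\wedge\text{win}]\le\varepsilon_\mathsf{cr}$, and summing the two cases yields $\varepsilon_\mathsf{sig}+\varepsilon_\mathsf{cr}$.
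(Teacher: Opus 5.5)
Your split on whether $(h^*_L,\mathit{adm}^*)$ was ever outer-signed, and your backward scan for the largest index $i^*$ where the chain inputs differ, are sharper than the paper's proof. The paper's Case~1 infers freshness of $h^*_L\|\mathit{adm}^*$ from freshness of $M^*$, which does not follow. Its Case~2 simply asserts that the collision sits at the differing immutable block $j$ with a common prefix $h_{j-1}$.

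\textbf{The gap is in closing Case~$\neg E$.} You claim the winning condition forces the colliding position to be non-admissible, and this is false. The winning condition only says that \emph{some} block on which $M^*$ and $M^{(j)}$ differ is immutable. It says nothing about where the two chains re-merge, and that is where your scan finds the collision. Take $L=2$, $\mathit{adm}=(0,1)$, a queried $M^{(j)}=(A,B)$, and a forgery $M^*=(A',B^*)$ with $A'\neq A$ and $h^*_1\neq h^{(j)}_1$. Then $M^*$ wins even under your relaxed notion of authenticated content. Yet $x^*_2\neq x^{(j)}_2$, so $i^*=2$ and the extracted collision is
\[
\mathsf{CH}_{\Hpub^\mathsf{san}}(h^*_1\|B^*,r^*_2)=\mathsf{CH}_{\Hpub^\mathsf{san}}(h^{(j)}_1\|B,r^{(j)}_2).
\]
This collision is on the key whose trapdoor your $\calB_\mathsf{cr}$ already holds, so it yields nothing. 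It is exactly the signature a holder of $\sk_\mathsf{san}$ obtains by running $\mathsf{CHCol}$ at block~2, so no argument about message blocks alone can exclude it. You also conflate ``collision at an admissible block'' with ``allowed sanitization''. For an adversary without $\sk_\mathsf{san}$, such a collision is either one the oracle handed out or a genuine break of $\Hpub^\mathsf{san}$.

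This does not refute the statement, but your reduction must change. $\calB_\mathsf{cr}$ must also be able to embed its challenge in $\Hpub^\mathsf{san}$, either by guessing which key or by playing a two-instance game, and this adds a second collision term to the bound. In that branch it must answer sanitization queries \emph{without} the trapdoor, and show that the output collision is not one the oracle already revealed. That requires one of two things:
\begin{itemize}
\item collision resistance in the presence of a collision-finding oracle, which is stronger than the plain CR of Theorem~\ref{thm:cr}; or
\item programming $G$ to simulate sanitizations, which is delicate because $\calA$ can query $G(h_{i-1}\|M'[i])$ before requesting the sanitization.
\end{itemize}
The same issue arises for forgeries that differ from every query only on admissible blocks. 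These count as wins under the standard Brzuska et al.\ notion that you relaxed away. The paper's Case~2 has the same hole, hidden by its assertion that the collision occurs at block $j$.
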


\begin{proof}
Let $\calA$ be a PPT adversary given $\pk$ and access to signing and
sanitization oracles.
Suppose $\calA$ outputs a forged valid signature $(M^*,\sigma^*)$ with
$\sigma^*=(h^*_L,\sigma^*_\mathsf{sig},(r^*_1,\dots,r^*_L),\mathit{adm}^*)$.

\textbf{Case~1 (Fresh forgery)}: $M^*$ was never queried to the signing
oracle with the same mask $\mathit{adm}^*$. The hash chain for $M^*$ under $\mathit{adm}^*$ produces some $h^*_L$, and $\sigma^*_\mathsf{sig}$ is a valid base signature on $h^*_L\|\mathit{adm}^*$. Since this value was never presented to the signing oracle, this is a valid forgery for the base $\EUF$ scheme. Reduction $\calB_1$ simulates $\calA$'s signing oracle by querying its own base signing oracle, and forwards the forgery.

\textbf{Case~2 (Collision on immutable block)}: The forged $M^*$ agrees with some queried $M$ on mask $\mathit{adm}=\mathit{adm}^*$, but differs on some immutable block $j$ with $\mathit{adm}[j]=0$. For $\sigma^*$ to verify, the reconstructed hash chain must yield $h^*_L$ matching the signed value. At block $j$ the hash is computed with $H_\mathsf{non}$:
\[
\mathsf{CH}_{\Hpub^\mathsf{non}}(h_{j-1}\|M[j], r_j)
= \mathsf{CH}_{\Hpub^\mathsf{non}}(h_{j-1}\|M^*[j], r^*_j).
\]
Since $M[j]\ne M^*[j]$, this is a non-trivial collision on $H_\mathsf{non}$ (for which $\calA$ holds no trapdoor). Reduction $\calB_2$ extracts this collision to solve
$\mathrm{SD}(n,n{-}k,2t)$ as in Theorem~\ref{thm:cr}: it computes $e = r_j \xor r^*_j$ and $s_u = (G(h_{j-1}\|M[j])\xor G(h_{j-1}\|M^*[j]))\cdot(\Hpub^\mathsf{non})^T$, then outputs $(\Hpub^\mathsf{non},\,s_u,\,2t,\,e)$ as an SD witness.

Combining, the total forging advantage is at most
$\varepsilon_\mathsf{sig} + \varepsilon_\mathsf{cr}$, both negligible by assumption.\qed
\end{proof}

\subsection{Immutability}\label{subsec:immutability-proof}

\begin{theorem}[Immutability]\label{thm:immutability}
Even an adversary holding $\sk_\mathsf{san}$ cannot produce a valid signature
on a message differing from any queried message in an immutable block, except
with probability $\varepsilon_\mathsf{cr}$.
\end{theorem}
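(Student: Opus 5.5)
The plan is a reduction $\calB$ that breaks collision resistance of $\mathsf{CH}_{\Hpub^\mathsf{non}}$, in the sense of Theorem~\ref{thm:cr}, and hence solves $\mathrm{SD}(n,n{-}k,2t)$. $\calB$ receives $\Hpub^\mathsf{non}$ as its challenge. It generates the sanitizer's Goppa code $C_\mathsf{san}$ and the outer signature key pair $(\pk_\mathsf{sig},\sk_\mathsf{sig})$ itself, and hands $\sk_\mathsf{san}$ to $\calA$. Because $\calB$ holds $\sk_\mathsf{sig}$, it answers every signing query honestly. It answers $G$ queries by lazy sampling and records the query list, exactly as in Theorem~\ref{thm:cr}. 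The simulation is therefore perfect. Since $\calA$ holds the trapdoor of $C_\mathsf{san}$, there is no need to simulate sanitization queries.

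Let $\calA$ output a valid $(M^*,\sigma^*)$ with $\sigma^*=(h^*_L,\sigma^*_\mathsf{sig},(r^*_i),\mathit{adm}^*)$. $\calB$ then splits into two cases.

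\textbf{Case (a):} $h^*_L\|\mathit{adm}^*$ was never signed by $\calB$. Then $\sigma^*_\mathsf{sig}$ is a forgery on the outer scheme, which happens with probability at most $\varepsilon_\mathsf{sig}$. This term is absorbed exactly as in Case~1 of Theorem~\ref{thm:euf}, or else guessed away by running the alternate reduction.

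\textbf{Case (b):} $h^*_L\|\mathit{adm}^*$ equals $h_L\|\mathit{adm}$ for some signing query $(M,\mathit{adm})$ with randomizers $(r_i)$, and $M^*$ differs from $M$ on some block $j$ with $\mathit{adm}[j]=0$.
\begin{enumerate}
\item Recompute both hash chains $(h_i)$ for $(M,r)$ and $(h^*_i)$ for $(M^*,r^*)$, and let $i_0$ be the largest index with $(h_{i_0-1}\|M[i_0],r_{i_0})\ne(h^*_{i_0-1}\|M^*[i_0],r^*_{i_0})$ but $h_{i_0}=h^*_{i_0}$. Such an $i_0$ exists because $h_L=h^*_L$ and the inputs differ at $j$. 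One walks backward from $L$: whenever inputs at $i$ agree, $h_{i-1}=h^*_{i-1}$, so one continues until the first disagreeing input.
\item If $\mathit{adm}[i_0]=0$, this gives a collision for $\mathsf{CH}_{\Hpub^\mathsf{non}}$. $\calB$ extracts $e=r_{i_0}\xor r^*_{i_0}$ and $s_u$ as in Theorem~\ref{thm:cr}.
\end{enumerate}

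The main obstacle is that $i_0$ may land on an admissible block, where $\calA$ can legitimately produce collisions with the trapdoor. The backward walk then stops at an admissible block sitting after $j$. To handle this, I would instead argue forward from $j$. Consider the first immutable differing index $j$; if earlier admissible blocks changed, one can take $j$ to be any index where $\mathit{adm}[j]=0$. The chaining alone does not force $h_j=h^*_j$, so the honest statement needs more. I would try to argue that the output $h_j$ of an immutable block is fed into an admissible block whose collision $\calA$ can find for \emph{any} prefix. This shows the chain as described does not bind immutable blocks positioned before admissible ones.

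The repair I would try first is to read the immutability guarantee per immutable block, combined with the collision-resistance bound for the ROM hash. Concretely, one requires that the input $h_{j}$ entering the next block is recoverable from the signature, or equivalently that $\mathit{adm}$-$0$ blocks are hashed into a separate $\Hpub^\mathsf{non}$ chain bound by $\sigma_\mathsf{sig}$. Under that reading, a mismatch at an immutable block yields an $\Hpub^\mathsf{non}$ collision directly. The total bound is then $\varepsilon_\mathsf{sig}+\varepsilon_\mathsf{cr}$, with $\varepsilon_\mathsf{cr}\le\epsilon_{\mathrm{SD}}+q_G^2/2^{n-k}$ by Theorem~\ref{thm:cr}.
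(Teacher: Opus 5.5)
Your proposal does not prove the theorem as stated. That is not a weakness of your method: the obstacle you ran into is a genuine attack on the construction, and it sits exactly in the step the paper's own proof passes over. The paper says the situation mirrors Case~2 of Theorem~\ref{thm:euf}. Block $j$ is hashed under $\Hpub^\mathsf{non}$, for which the adversary has no trapdoor, so any valid signature on $M'$ with $M'[j]\neq M[j]$, $\mathit{adm}[j]=0$ ``requires a collision on $H_\mathsf{non}$'', which is then reduced to SD as in Theorem~\ref{thm:cr}. That tacitly assumes the two chains agree at both $h_{j-1}$ and $h_j$, and neither follows from $h_L=h'_L$. Your backward walk is the correct way to locate the collision. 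It stops at the last index where the hash inputs differ, and nothing forces that index to be immutable.

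Concretely, take $L=2$, $\mathit{adm}=(0,1)$ and an honest signature $\sigma=(h_2,\sigma_\mathsf{sig},(r_1,r_2),\mathit{adm})$ on $M$. The adversary does the following:
\begin{itemize}
\item it picks $M'[1]\neq M[1]$ and computes $h'_1=\mathsf{CH}_{\Hpub^\mathsf{non}}(0^{n-k}\|M'[1],r_1)$;
\item it runs the trapdoor collision finder for $\Hpub^\mathsf{san}$ on $(h_1\|M[2],\,r_2,\,h'_1\|M[2])$ to get a weight-$t$ vector $r'_2$;
\item it outputs $(h_2,\sigma_\mathsf{sig},(r_1,r'_2),\mathit{adm})$ on $M'=(M'[1],M[2])$.
\end{itemize}
This verifies with the untouched $\sigma_\mathsf{sig}$ and alters an immutable block without any $\Hpub^\mathsf{non}$ collision. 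The trapdoor call faces a syndrome made uniform by a fresh $G$-output, exactly as in an honest sanitization step, so the attack succeeds whenever $\mathsf{Sanitize}$ does. For a sanitization-correct scheme, as the paper claims this one is, the statement is false. (With $n=3488$, $n-k=768$, $t=64$, a uniform syndrome has a weight-$t$ preimage with probability at most $\binom{n}{t}/2^{n-k}\approx 2^{-312}$. So in fact both the attack and honest sanitization fail with overwhelming probability, and the statement can hold only in a regime where $\mathsf{Sanitize}$ itself does not work.)

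Your argument does establish, for the unmodified scheme, the restricted claim for immutable blocks with no admissible block after them. Every index from $j$ to $L$ is then hashed under $\Hpub^\mathsf{non}$, so the backward walk ends in an $\Hpub^\mathsf{non}$ collision.

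Your Case~(a) is also a real correction. An adversary holding $\sk_\mathsf{san}$ can still win by forging the outer signature on a fresh $h_L\|\mathit{adm}$, a case the paper's proof never treats. The bound must therefore be $\varepsilon_\mathsf{sig}+\varepsilon_\mathsf{cr}$ rather than $\varepsilon_\mathsf{cr}$.

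As for the repair, the two variants you call equivalent are not. Merely making the intermediate $h_j$ recoverable from $\sigma$ achieves nothing unless those values are covered by $\sigma_\mathsf{sig}$, since the attacker just supplies the recomputed values. What works is to authenticate the immutable content directly, in one of two ways:
\begin{itemize}
\item sign every per-block digest, which pins the collision to block $j$ itself;
\item chain the immutable blocks alone under $\Hpub^\mathsf{non}$ and sign that chain's final digest together with the admissible-chain digest and $\mathit{adm}$; your backward walk on the immutable chain then ends in an $\Hpub^\mathsf{non}$ collision.
\end{itemize}
That, however, is a theorem about a modified scheme. It should be stated as such rather than offered as a proof of Theorem~\ref{thm:immutability}.
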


\begin{proof}
The proof mirrors Case~2 of Theorem~\ref{thm:euf}. The adversary holds $\sk_\mathsf{san}$, which provides the trapdoor only for $\Hpub^\mathsf{san}$. For an immutable block $j$ with $\mathit{adm}[j]=0$, the hash is computed with $H_\mathsf{non}$, for which the adversary has no trapdoor. Any valid sanitized signature on $M'$ (with $M'[j]\ne M[j]$, $\mathit{adm}[j]=0$) requires a collision on $H_\mathsf{non}$ for which the adversary holds no trapdoor. This reduces to $\mathrm{SD}(n,n{-}k,2t)$ on $\Hpub^\mathsf{non}$ via the same construction as Theorem~\ref{thm:cr}, Case~2.\qed
\end{proof}

\section{Transparency Analysis}\label{sec:transparency}
Transparency is crucial for privacy in sanitizable signatures, ensuring that a third party cannot distinguish a freshly signed document from one that has been altered by an authorized sanitizer. Our scheme achieves perfect transparency due to a specific constraint on the signer's randomizer selection.

\subsection{Distribution Analysis}\label{subsec:distributions}
Transparency hinges on the indistinguishability of the randomizers $r_i$ produced by the sanitizer from those produced by the signer. Let's analyze their distributions:

\begin{itemize}
\item \textbf{Signer's Distribution ($\calD_\mathsf{fresh}$):} In our design, the signer samples randomizers $r_i$ uniformly from the set of all $n$-bit vectors with a Hamming weight of \emph{exactly} $t$. This set has a size of $\binom{n}{t}$.
\item \textbf{Sanitizer's Distribution ($\calD_\mathsf{san}$):} Patterson decoding, when successful, \emph{always} returns the unique error vector with a Hamming weight of \emph{exactly} $t$. Therefore, the sanitizer's generated randomizers also conform to this weight-$t$ constraint.
\end{itemize}

Because both the signer and the sanitizer produce randomizers from the exact same distribution (uniform over all $n$-bit vectors of weight $t$), the statistical distance between $\calD_\mathsf{san}$ and $\calD_\mathsf{fresh}$ is:

\[
\delta = \mathrm{SD}(\calD_\mathsf{san}, \calD_\mathsf{fresh}) = 0.
\]

This implies we achieve \emph{perfect transparency}. There is no statistical or computational advantage for any distinguisher to tell whether a signature originated from the signer or was modified by the sanitizer, solely based on the randomizers.

\subsection{Formal Transparency Bound}\label{subsec:transparency-bound}
The perfect alignment of the randomizer distributions simplifies the transparency bound significantly.

\begin{lemma}[Transparency Bound]\label{lem:transparency}
Let $\delta=\mathrm{SD}(\calD_\mathsf{san}, \calD_\mathsf{fresh})$. For any distinguisher $\calA$ given either a fresh or a sanitized signature on an $L$-block message:
\[
\Adv{\calA}^\mathsf{trans} \le L\cdot\delta + \varepsilon_\mathsf{sig},
\]
where $\varepsilon_\mathsf{sig}$ is the $\EUF$ advantage of the base signature.
\end{lemma}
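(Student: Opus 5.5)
The plan is a standard hybrid argument over the $L$ randomizer positions, with the outer signature component handled separately. Fix the final message $M'$ and mask $\mathit{adm}$ given to the distinguisher. In the fresh world the signature is $\sigma_0=(h_L,\sigma_\mathsf{sig},(r_1,\dots,r_L),\mathit{adm})$, where every $r_i\leftarrow\calD_\mathsf{fresh}$, $h_L$ is the chain value of $M'$, and $\sigma_\mathsf{sig}$ signs $h_L\|\mathit{adm}$. In the sanitized world the signature $\sigma_L$ is obtained from a fresh signature on the original $M$ by running $\mathsf{Sanitize}$. The first observation is that in both worlds $h_L$ is a deterministic function of $(M',(r_i)_i)$ via the public chain. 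Likewise $\sigma_\mathsf{sig}$ is a signature on $h_L\|\mathit{adm}$ produced by $\mathsf{Sign}_\mathsf{sig}$ with fresh coins, because sanitization preserves both $h_L$ and $\sigma_\mathsf{sig}$. Hence the joint distribution is determined by the distribution of the randomizer vector $(r_1,\dots,r_L)$ together with the signing coins.

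Next I define hybrids $\mathsf{Hyb}_0,\dots,\mathsf{Hyb}_L$. In $\mathsf{Hyb}_j$ the randomizers $r_1,\dots,r_j$ are distributed as the sanitizer would output them and $r_{j+1},\dots,r_L$ as the signer samples them. For $i\notin I$ (non-admissible blocks, or blocks that are not changed) the sanitizer leaves $r_i$ as sampled by the signer, so the two worlds agree trivially at position $i$. For $i\in I$, the value $r'_i$ is the Patterson output on a syndrome determined by the signer's $r_i$ and the preceding chain, and I take its marginal law to be $\calD_\mathsf{san}$. Consecutive hybrids $\mathsf{Hyb}_{j-1},\mathsf{Hyb}_j$ differ only in the law of one coordinate. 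By the data-processing inequality applied to the post-processing map that recomputes the chain and signs, their statistical distance is at most $\delta$. The triangle inequality then gives $L\cdot\delta$.

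The additive $\varepsilon_\mathsf{sig}$ term covers the remaining discrepancy: in the sanitized world $\sigma_\mathsf{sig}$ was generated for a chain value computed from $M$, not from $M'$. Correctness of collision finding (the correctness computation in Section~\ref{subsec:ch-construction}) guarantees that these values are equal. A distinguisher could only exploit a mismatch by producing or detecting a base signature on a value $h_L\|\mathit{adm}$ that was never signed, which I would bound by a reduction to the base scheme's $\EUF$ advantage. For the weight-exactly-$t$ regime, the discussion in Section~\ref{subsec:distributions} gives $\delta=0$, so the bound collapses to $\varepsilon_\mathsf{sig}$.

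The main obstacle is the middle hybrid step, namely showing that the sanitizer's $r'_i$ is independent of the other coordinates and has law exactly $\calD_\mathsf{san}$. In fact $r'_i$ is a deterministic function of $(r_i,h_{i-1},M[i],M'[i])$, and Patterson only reaches error patterns of weight at most $t$ whose syndrome matches. I would first try to argue that, because $G$ is a random oracle and $M[i]\ne M'[i]$, the offset $G(m_\text{old})\xor G(m_\text{new})$ is fresh and uniform. Then the target syndrome $s_\mathsf{target}$ is uniform, and conditioned on decodability the decoded $f$ is uniform over weight-$t$ vectors, since each such vector has a distinct syndrome. If that conditioning step fails, I would fold the probability of an undecodable syndrome, and the associated resampling, into $\delta$.
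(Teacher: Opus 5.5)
Your hybrid-plus-data-processing outline is essentially what the paper has in mind. Its proof only asserts that a hybrid argument goes through and then substitutes $\delta=0$. The step that carries your argument, however, fails.

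\textbf{The data-processing step does not apply.} The data-processing inequality needs the distinguisher's view to be a post-processing of $(r_1,\dots,r_L)$ with coins independent of it. In the ROM the view also contains the oracle $G$, and in the standard game the adversarially chosen $M$. The sanitized $r'_i$ is a deterministic function of $r_i$, $G(h_{i-1}\|M[i])$ and $G(h_{i-1}\|M'[i])$. So each hybrid step must bound the distance between the \emph{joint} laws of $G$ and the randomizer, not between the marginals $\calD_\mathsf{san},\calD_\mathsf{fresh}$, and these joint laws are far apart.

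Recompute the chain of $M'$ and set $s^*_i=(r'_i\xor G(h_{i-1}\|M'[i])\xor G(h_{i-1}\|M[i]))\cdot\Hpub^T$.
\begin{itemize}
\item After sanitization, $s^*_i=r_i\Hpub^T$, which is always the syndrome of a weight-$t$ word.
\item In a fresh signature, $s^*_i$ is (up to negligible terms) uniform on $\F_2^{n-k}$, by exactly the ROM freshness you invoke. So it is a weight-$t$ syndrome only with probability $\binom{n}{t}/2^{n-k}$.
\end{itemize}
Hence even with marginal $\delta=0$ (reading $\calD_\mathsf{san}$ as the law of successful outputs, as the paper does), a distinguisher that tests whether $s^*_i$ decodes to weight $t$ wins with advantage $1-\binom{n}{t}/2^{n-k}$. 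This holds for an unbounded distinguisher, or one holding the trapdoor of $\Hpub^\mathsf{san}$. The collapse to $\varepsilon_\mathsf{sig}$ is therefore false as a statistical claim. For PPT distinguishers you would need a reduction to a decisional syndrome-decoding assumption (telling $e\Hpub^T$ with $\wt{e}=t$ from uniform), which is absent from $L\cdot\delta+\varepsilon_\mathsf{sig}$. Your ``main obstacle'' paragraph looks for independence among the coordinates of $r$, but the dependence that matters is on $G$.

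\textbf{Your fallback does not rescue this.} Since $s_\mathsf{target}$ is uniform, it is decodable only with probability $\sum_{j\le t}\binom{n}{j}/2^{n-k}$, which is below $2^{-311}$ for $(n,k,t)=(3488,2720,64)$. The sanitizer also has nothing to resample, because $s_\mathsf{target}$ is fixed by $(m_\text{old},r_i,m_\text{new})$. Folding failures into $\calD_\mathsf{san}$ therefore makes $\delta\approx 1$.

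Under that reading the inequality can be verified for $L=1$: the successful sanitized view equals the fresh view conditioned on $s^*_i$ being a weight-$t$ syndrome, so the optimal advantage is exactly $\delta$. But that comes from an explicit computation of the joint law with $G$, not from marginal data processing, and the bound is then vacuous.

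\textbf{Smaller points.}
\begin{itemize}
\item Conditioned on decodability alone, $f$ is uniform on the whole radius-$t$ ball, so you must also condition on $\wt{f}=t$.
\item Your $\varepsilon_\mathsf{sig}$ paragraph is not an argument. Correctness leaves $h_L$ unchanged, so there is no mismatch for a forger to exploit; the term is merely harmless slack.
\end{itemize}
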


\begin{proof}
The proof, based on a hybrid argument, remains valid. However, with our enforced weight-$t$ constraint, $\delta = 0$. Consequently, the adversary's advantage in distinguishing a fresh from a sanitized signature reduces to:
\[
\Adv{\calA}^\mathsf{trans} \le L \cdot 0 + \varepsilon_\mathsf{sig} = \varepsilon_\mathsf{sig}.
\]
This implies that the distinguishability is solely limited by the existential unforgeability of the underlying base signature scheme, which is negligible by assumption. Therefore, our scheme offers \emph{perfect transparency} for any number of admissible blocks $L$.
\end{proof}

\section{Performance Evaluation}\label{sec:performance}
We implemented a Python prototype to validate the scheme's functionality and benchmark its performance using parameters aligned with NIST's Classic-McEliece recommendations.

\subsection{Implementation Details}\label{subsec:impl}
The Python prototype prioritizes correctness and clarity. Benchmarks were conducted on an HP EliteBook 840 G8 (Intel Core i5-1135G7 at 2.4 GHz, 16 GB RAM, Windows 11). The complete implementation, including benchmark scripts, is available at \url{https://anonymous.4open.science/r/PQ-SS-C7D8}.

We tested three parameter sets:
\begin{itemize}
\item \textit{Toy:} $n=256, k=128, t=16$
\item \textit{Medium:} $n=1024, k=524, t=50$
\item \textit{Secure:} $n=3488, k=2720, t=64$ (NIST Category 1, $\approx 128$-bit post-quantum security)
\end{itemize}

For the base signature, we simulated Dilithium2 by using its specified key and signature sizes ($\pk$: 1.3 KB, $\sigma$: 2.4 KB) while employing HMAC-SHA3-256 for the cryptographic operation.

\begin{remark}[Prototype Decoder Limitations]
A critical note for production deployments: the prototype uses a randomized search for syndrome decoding instead of a true Patterson algorithm. This means for $n \ge 512$, the prototype's `sanitize` function will often fail verification due to the low probability of finding the exact weight-$t$ error in a randomized search. A real Patterson decoder, as available in optimized McEliece implementations, is deterministic and guarantees finding the correct error vector when it exists, in $O(n \cdot t)$ operations. The benchmark timings for ``San (proto)'' are thus indicative of the Python overhead and the randomized search, not the actual Patterson performance.
\end{remark}

\subsection{Key and Signature Sizes}\label{subsec:sizes}
Table~\ref{tab:sizes} presents the key and signature sizes for various parameter sets.

\begin{table}[t]
	\centering
	\caption{Key and signature sizes. Dilithium2 base signature: PK 1.3~KB,
			signature 2.4~KB. PQ bits from NIST Classic-McEliece Round~4 spec, Table~4.}
	\label{tab:sizes}
	\begin{tabular}{@{}lrrrrrr@{}}
			\toprule
			Parameters & $n$ & $t$ & NIST Cat. & PQ bits & Total PK (KB) &
			Sig $L{=}10$ (KB) \\
			\midrule
			Toy       & 256  & 16 & N/A        & $\approx$64  &   9.3 & 2.69 \\ 
			Benchmark & 512  & 32 & N/A        & $\approx$80  &  35.2 & 3.10 \\
			Medium    & 1024 & 50 & Cat. 1$^*$ & $\approx$128 & 126.3 & 3.68 \\ 
			Secure    & 3488 & 64 & Cat.~1     & $\approx$128 & 655.3 & 6.72 \\ 
			\bottomrule
		\end{tabular}
\end{table}

The dominant factor in the public key size is the two public parity-check matrices ($2 \times 327$ KB for secure parameters). While substantial, this is still competitive with other code-based or lattice-based schemes. Using MDPC/LDPC codes~\cite{monico_ldpc_nodate} could potentially reduce public key sizes to around 100 KB at comparable security levels.

The signature size scales linearly with the number of blocks $L$ because each block requires an $n$-bit randomizer. For secure parameters, each block adds $3488/8 = 436$ bytes to the signature.

\subsection{Operation Timing}\label{subsec:timing}
Table~\ref{tab:timing-n512} reports the prototype and theoretical timings for the demo parameter set ($n=512, k=256, t=32$), and Table~\ref{tab:timing-n1024} shows the timings for the medium parameter set ($n=1024, k=524, t=50$). The ``Total (th., ms)'' column combines prototype signing/verification with the theoretical Patterson decoding time, offering a more realistic estimate for a production system.

\begin{table}[t]
\centering
\caption{Operation Timing (Prototype: $n=512, k=256, t=32$). ``San (proto)'' measures the Python randomized decoder; ``Patterson (th.)'' is calibrated from $O(n \cdot t)$ at $n=3488, t=64 \approx 8$ ms.}
\label{tab:timing-n512}
\begin{tabular}{@{}rrrrrr@{}}
\toprule
\textbf{$L$} & \textbf{Sign (ms)} & \textbf{San (proto, ms)} & \textbf{Patterson (th., ms)} & \textbf{Verify (ms)} & \textbf{Total (th., ms)}\\
\midrule
1 & 2.06 & 924.9 & $\sim$0.6 & 2.08 & $\sim$4.7 \\
5 & 6.43 & 989.5 & $\sim$0.6 & 5.76 & $\sim$12.8 \\
10 & 8.09 & 1040.9 & $\sim$0.6 & 10.37 & $\sim$19.0 \\
20 & 14.05 & 1032.0 & $\sim$0.6 & 13.05 & $\sim$27.7 \\
\bottomrule
\end{tabular}
\end{table}

\begin{table}[t]
\centering
\caption{Operation Timing (Prototype: $n=1024, k=524, t=50$). ``San (proto)'' measures the Python randomized decoder; ``Patterson (th.)'' is calibrated from $O(n \cdot t)$ at $n=3488, t=64 \approx 8$ ms.}
\label{tab:timing-n1024}
\begin{tabular}{@{}rrrrrr@{}}
\toprule
\textbf{$L$} & \textbf{Sign (ms)} & \textbf{San (proto, ms)} & \textbf{Patterson (th., ms)} & \textbf{Verify (ms)} & \textbf{Total (th., ms)}\\
\midrule
1 & 5.27 & 0.0 & $\sim$1.8 & 4.25 & $\sim$11.4 \\
5 & 18.63 & 5178.2 & $\sim$1.8 & 14.59 & $\sim$35.1 \\
10 & 28.12 & 4910.0 & $\sim$1.8 & 29.72 & $\sim$59.7 \\
20 & 54.28 & 4849.5 & $\sim$1.8 & 60.32 & $\sim$116.4 \\
\bottomrule
\end{tabular}
\end{table}

For the secure parameter set ($n=3488, t=64$), the theoretical Patterson time per modified block is approximately 8 ms. Signing and verification times grow approximately linearly with $L$. A native C implementation (e.g., using the Classic-McEliece reference code~\cite{chou_accelerating_nodate}) would likely reduce these timings by 10--50$\times$.

It's critical to reiterate that the prototype's ``San (proto)'' timings are misleading for $n \ge 512$. The values of ``0.0 ms'' (for $L=1$, $n=1024$) and the large ``San (proto)'' times for $L>1$ (for $n=512$ and $n=1024$) reflect the randomized search's failure or extensive attempts, not a successful Patterson decoding. Without a true Patterson decoder, the prototype's `sanitize' would fail verification for these parameters. The ``Total (th., ms)'' column corrects for this by incorporating the theoretical Patterson estimate.

\subsection{Scalability}\label{subsec:scalability}
Scalability of the scheme is largely linear with the number of message blocks $L$:
\begin{itemize}
\item \textit{Signature Size:} Grows linearly with $L$ at 436 bytes per block (one $n$-bit randomizer for $n=3488$). For secure parameters, this ranges from 4.6 KB at $L=5$ to 11.1 KB at $L=20$.
\item \textit{Signing and Verification Time:} Grow approximately linearly with $L$, as each block requires one hash invocation. The hash chain structure keeps the final digest $h_L$ and base signature $\sigma_\mathsf{sig}$ constant, regardless of $L$.
\end{itemize}

\subsection{Comparative Analysis}\label{subsec:comparison}
Table~\ref{tab:comparison2} compares our scheme against classical RSA-2048 and the lattice-based scheme of Clermont et al.~\cite{ClermontPQ2025} for $L=10$ blocks.

\begin{table}[t]
	\centering
	\caption{Scheme comparison at $L=10$. Both Clermont et al.\ and this work
			require the ROM for chameleon hash CR.}
	\label{tab:comparison2}
	\begin{tabular}{@{}lcccrrl@{}}
			\toprule
			Scheme & PQ & CH Model & Transparency & PK (KB) & Sig (KB) & Assumption \\
			\midrule
			RSA-2048 & No & Standard & Perfect & $\sim$8 & $\sim$0.3 & Factoring \\
			Clermont~\cite{ClermontPQ2025} & Yes & ROM & Weak & $\sim$850 & 5--6 & MLWE/MSIS \\
			\textbf{This work} & \textbf{Yes} & \textbf{ROM} & \textbf{Perfect (delta=0)} &
			\textbf{$\sim$655} & \textbf{$\sim$7} & \textbf{Synd.\ Dec.} \\ 
			\bottomrule
		\end{tabular}
\end{table}

Key observations:
\begin{enumerate}
\item \textit{Public Key Size:} Our public key size of $\sim$655 KB is smaller than the lattice-based alternative ($\sim$850 KB), while both offer similar post-quantum security levels.
\item \textit{Signature Size:} At $\sim$7 KB for $L=10$, our signature is slightly larger than the lattice-based one (5--6 KB). This difference reflects the larger randomizer vectors inherent to code-based cryptography.
\item \textit{Security Foundation:} Our scheme rests on the syndrome decoding problem, which has been studied as a hard problem for over 45 years. This provides a more conservative and mature security foundation compared to Module-LWE, which is roughly 15 years old.
\item \textit{Transparency:} A key advantage is our achievement of \emph{perfect transparency} (statistical distance $\delta=0$) for any $L$, a stronger guarantee than the ``Weak'' transparency of the lattice-based scheme. This is enabled by enforcing the weight-$t$ constraint for signer-generated randomizers, ensuring their distribution exactly matches that of Patterson decoding's output.
\end{enumerate}

\section{Applications and Deployment}\label{sec:applications}

\subsection{Application Scenarios}

\paragraph{Medical records sanitization.}
Hospitals publishing anonymized patient records with physician signatures need to redact identifiers (e.g.\ name, date of birth, address) while
preserving the clinical findings. With the weight-$t$ constraint, perfect transparency holds, meaning redactions are cryptographically undetectable. Immutability ensures clinical data cannot be silently altered. Post-quantum security protects long-term archival records against future quantum adversaries. For deployment, Dilithium2 keys can be stored in hospital PKI infrastructure; the sanitizer key lives in a hardware security module (HSM).

\paragraph{Certificate field updates.}
Certificate authorities can use the scheme to update expiration dates in X.509 certificates while keeping subject, issuer, public key, and algorithm fields immutable. With $L=1$ (one mutable field), a single block achieves perfect transparency. Post-quantum security provides critical infrastructure protection for certificates spanning decades.

\paragraph{Supply chain documentation.}
Manufacturers can redact commercially sensitive fields (unit prices, supplier
contracts) from authenticated supply chain records while keeping product
specifications and certifications immutable.
For $L\le 5$ blocks without the weight-$t$ constraint, the distinguishing advantage
is $\le 9.3\%$, which may be acceptable for supply-chain contexts where
competitive secrecy rather than strict privacy is the concern.
With the constraint, the advantage is negligible regardless of $L$.

\paragraph{Blockchain-based document sanitization.}
Smart contracts can store $(M',\sigma')$ pairs with immutable audit logs.
In public blockchain contexts, transparency (hiding sanitizer involvement)
is less critical, while unforgeability and immutability provide necessary
legal guarantees.
The $\approx 7$~KB signature size compares favorably to other post-quantum
schemes, improving blockchain throughput.

\subsection{Deployment Guidance}
\textit{Use this scheme when:} (a) long-term post-quantum security is required (records spanning 10+ years); (b) the conservative hardness of
syndrome decoding (45-year-old problem) is preferable to Module-LWE; (c) perfect transparency under the weight-$t$ constraint is needed.

\textit{Consider alternatives when:} (a) key size is a hard constraint
(MDPC-based schemes could reduce this to $\approx$100~KB); (b) bandwidth is severely limited (IoT); (c) a standard-model proof for chameleon hash CR is required (no known code-based construction achieves this); (d) classical security is sufficient for short-lived documents.

\section{Discussion and Limitations}\label{sec:discussion}

\paragraph{Security model.}
Our scheme's security relies on the random oracle model for chameleon hash collision resistance, while unforgeability and immutability hold in the standard model. The ROM is a pragmatic choice: no known code-based chameleon hash achieves standard-model CR, and the ROM is standard practice in post-quantum cryptography (all lattice-based constructions, including~\cite{ClermontPQ2025}, also require the ROM for their chameleon hash CR). Removing the ROM requirement for CR is an open problem.

\paragraph{Exposure-freeness.}
Our chameleon hash satisfies CR and CMR (message-binding collision resistance)~\cite{derler_collision_2020}, but is not \emph{exposure-free}: an observer who sees $(m,r)$ and a collision $(m',r')$ may extract information about the trapdoor. This implies that a sanitizer who performs many sanitization on the same message block could gradually expose trapdoor information. Exposure-free variants are substantially more complex to construct.

\paragraph{Policy hiding.}
The admissibility mask $\mathit{adm}$ is public, and the two different public matrices $\Hpub^\mathsf{non}$ and $\Hpub^\mathsf{san}$ are visible to verifiers. An observer can therefore determine which blocks were designated as mutable. Policy-hiding is not provided. Achieving it would require hiding which hash instance is applied to each block, significantly complicating the construction.

\paragraph{Constant-time implementation.}
The prototype decoding time varies with syndrome weight, creating a potential timing side channel. Production deployments must use constant-time Patterson decoding variants, as described in~\cite{chou_accelerating_nodate}, to mitigate this risk.

\paragraph{Quantum hardness of syndrome decoding.}
The quantum hardness of syndrome decoding is supported by analysis of the best known quantum information-set decoding (ISD) algorithms.
These provide only a quadratic speedup over classical ISD, which is absorbed by the parameter selection in Classic-McEliece-3488 (NIST Category~1,
$\approx 128$~bit post-quantum security). The Grover speedup applies to unstructured search but does not break the algebraic structure exploited in ISD.

\section{Conclusion}\label{sec:conclusion}
We presented the first post-quantum sanitizable signature scheme derived from the McEliece cryptosystem. The core construction is a chameleon hash $H(m,r)=(G(m)\xor r)\cdot\Hpub^T$ in the random oracle model, where the Goppa-code trapdoor enables Patterson decoding to find collisions in $O(n\cdot t)$ operations. We proved unforgeability and immutability under syndrome decoding hardness in the ROM and derived a precise transparency bound: $\delta=0.0187$ without modification and $\delta=0$ with the recommended weight-$t$ constraint, achieving perfect transparency for any number of admissible blocks.

Compared to the only prior post-quantum alternative~\cite{ClermontPQ2025}, our scheme offers smaller public keys (655.3~KB vs.\ 850~KB),
relies on a more conservative 45-year-old hardness assumption, and achieves perfect transparency with a simple algorithmic constraint. The main limitations are the large key size relative to classical schemes, the ROM requirement for chameleon hash CR, and the absence of exposure-freeness and policy-hiding.

Future work should explore: (a) MDPC/LDPC codes to reduce key sizes to $\approx$100~KB; (b) constant-time Patterson decoding for side-channel
resistance; (c) achieving exposure-freeness in the code-based setting; (d) extending to richer sanitization policies beyond binary admissibility
masks; and (e) a standard-model proof for code-based chameleon hash collision resistance, which would remove the ROM dependency entirely.

\bibliographystyle{splncs04}
\bibliography{mybibliography}

\end{document}